\newcommand{\C}{{\mathcal{C}}}
\newcommand{\E}{\mathbb{E}}
\newcommand{\tr}{{\rm tr}}
\newcommand{\0}{\boldsymbol{0}}
\newtheorem{proposition}{Proposition}
\newtheorem{lemma}{Lemma}
\newtheorem{corollary}{Corollary}
\newtheorem{theorem}{Theorem}
\title{Contract Theory Approach to\\ Incentivizing Market and Control Design}
\author{Yasuaki Wasa, Kenji Hirata and Kenko Uchida
\thanks{Yasuaki Wasa (corresponding author) and Kenko Uchida
are with the Department of Electrical Engineering and Bioscience,
Waseda University, Tokyo 169-8555, JAPAN.
Kenji Hirata is with the Graduate School of Science and Engineering for Research, 
University of Toyama, Toyama 930-8555, JAPAN.
{\tt\small wasa@aoni.waseda.jp, hirata@eng.u-toyama.ac.jp, kuchida@waseda.jp}.
}}
\begin{document}
\maketitle

\begin{abstract}
We discuss an incentivizing market and model-based approach 
to design the energy management and control systems 
which realize high-quality ancillary services in dynamic power grids. 
Under the electricity liberalization, 
such incentivizing market should secure a high speed market-clearing 
by using the market players' private information well. 
Inspired by contract theory in microeconomics field, 
we propose a novel design method of such incentivizing market 
based on the integration of the economics models and the dynamic grid model. 
The conventional contract problems are analyzed for 
static systems or dynamical systems with control inputs directly operated by the principal. 
The analysis is, however, in discord with the incentivizing market. 
The main challenge of our approach is to reformulate the contract problems 
adapted to the market from the system and control perspective. 
We first establish the fundamental formulas for optimal design, 
and clarify the basic properties of the designed market. 
We also discuss possibilities, limitation and some challenges 
in the direction of our approach and general market-based approaches.
\end{abstract}

\IEEEpeerreviewmaketitle

\section{Introduction}
\label{sec:1}

Achieving a quality assurance of electric energy, called the ancillary service, 
is a key target of next-generation energy management and control systems 
for dynamic electric smart grids 
where electricity liberalization is fully enforced and 
renewable energy is highly penetrated \cite{bib01}. 
Frequency, voltage and power controls, 
which are typical contents of the ancillary service, 
have been technical requirements for the electric energy supplier (e.g., see \cite{bib02,bib03}). 
Since the electricity liberalization starts, 
such ancillary control services have been investigated and realized 
in competitive electricity markets \cite{bib04,bib05,bib06}. 
In view these, future energy management and control systems 
should include ancillary service markets with some incentive mechanisms, as core elements, 
which provide high-quality and fast-response control services to the extent of the primary level. 
Moreover, if we need ancillary control services of transient state, 
ancillary service markets should include physical models of dynamic power grids. 
In this article, we propose an incentivizing market-based approach 
to design the energy management and control systems 
which realize high-quality ancillary services in such dynamic power grids. 
Using this approach, we develop a design method of such incentivizing market 
based on the integration of the economics models and the dynamic grid model, 
and provide fundamental conditions and formulas for the incentivizing market design.

Our approach is developed under the assumption 
that an energy dispatch scheduling on a future time interval 
has been finished in a spot energy market 
at the tertiary control level \cite{bib07,bib08}, e.g., for one hour future interval, 
and that each agent has a linearized model of his/her own system 
along the scheduled trajectory over the future time interval. 
For this linear time-varying model, 
we formulate a design problem of energy management and control systems 
based on a real-time regulation market, called the ancillary market, 
at the secondary and primary control levels \cite{bib07,bib08}. 
Participants in the dynamic electric smart grid are consumers, suppliers or prosumers, called agents, 
who control their physical system selfishly according to their own criterion, 
and utility (independent public commission), 
who integrates economically all the controls of agents into a high-quality power demand and supply. 
In the integration, a market mechanism is adopted inevitably 
in order to secure selfish behaviors of agents in electricity liberalization; 
that is, each agent bids his/her certain private information 
in response to a market-clearing price, 
while utility (auctioneer) clears the market based on the bidding and decides the prices, in real-time.

The market model in our approach is characterized by two terminologies: 
\emph{private information} and \emph{incentivizing market}. 
An iterative market-clearing model so-called the t\^atonnement model 
does not need rigorous models, 
but does not generally guarantee the convergence to a specified equilibrium. 
Moreover, if it converges, the t\^atonnement model takes a long time 
to converge at a market clearing equilibrium. 
To overcome the issues, we propose a novel model-based and market-based approach 
that designs first some incentives for the agents 
to report their private information (including their own model information) 
to the utility in the market, and makes it possible to realize a high speed market-clearing. 
This approach needs incentivizing costs, 
and the resulting optimization process can be recognized 
as an intermediate model (the second best model) between two extremal models, 
namely the t\^atonnement model 
and the so-called supply/demand function equilibrium model (the first best model) 
which uses for free all agents' rigorous models, i.e., agents' private information. 
We provide this approach with fundamental formulas and tools 
to design the incentivizing mechanism in the market, 
and discuss the basic properties of the designed market. 
We also discuss the relationships of our incentivizing mechanism 
with the Lagrange multiplier based integration/decomposition mechanism and the mechanism design. 

This article has been organized as follows: 
Section~\ref{sec:2} introduces a dynamic power grid model 
and a model-based incentivizing market model. 
We next derive some theoretical results on a general reward design problem in Section~\ref{sec:3}. 
In Section~\ref{sec:4}, we show the relationship between the private information 
and the incentives and discuss possibilities and limitation of our approach 
through three typical scenarios. 
In Section~\ref{sec:5}, we summarize the results.

\section{Grid Model and Incentivizing Market Model}
\label{sec:2}

%

In this paper, we consider the two level architecture with the two layers market; 
spot energy market and real-time regulation market. 
The well-known temporally-separated architecture \cite{bib07,bib08} 
motivated by the conventional power system control 
is divided into the primary control level (voltage and frequency stabilization), 
the secondary control level (quasi-stationary power imbalance control) 
and the tertiary control level (economic dispatch). 
The two layers market reorganizes the conventional three-level architecture 
according to the functions of the markets. 
Our approach is developed under the assumption 
that an energy dispatch scheduling on a future time interval has been finished 
in a spot energy market (at the tertiary control level), 
and that each agent has a linearized model of his/her own system 
along the scheduled trajectory over the future time interval. 
For this linear time-varying model, 
we formulate a design problem of energy management and control systems 
to realize ancillary services based on a real-time regulation market 
(at the secondary and primary control levels).

\subsection{Linearized Grid Model}
Let us first consider the linearized time-varying model used in the ancillary market. 
This paper considers one of the standard grid models, 
the average system frequency model \cite{bib09}, 
as a generic model of high speed response for ancillary service control problems 
with two area power networks and with two kinds of players: \emph{Utility} and \emph{Agents}. 
Here we present a linearized model of each player's own system 
along the scheduled trajectory over a future time interval 
during when an energy dispatch scheduling has been finished 
in a spot energy market.

The utility dynamics, which describes the deviation of 
the power and/or frequency balance and other deviations 
from physical constraints as well, obeys the following equation: 
\begin{eqnarray}
dx_{0t} 
= 
\left( A_{00}(t) x_{0t} + A_{01}(t) x_{1t} + A_{02}(t) x_{2t} \right) dt 
+ D_0(t) d\beta_t, \quad 
t_0 \leq t \leq t_f, 
\label{eq:u_dyn}
\end{eqnarray}
and is evaluated by the utility's revenue functional:
\begin{equation}
J_0(t,x;u) = \E_{t,x} 
\left[ \varphi_0(t_f, x_{t_f} ) + \int_{t}^{t_f} l_0( \tau, x_\tau, u_\tau ) d\tau \right]
\end{equation}
where 
$x=(x_0^\top, x_1^\top, x_2^\top)^\top\in\mathbb{R}^n$ is 
the collection of the states of 
the utility dynamics $x_0\in\mathbb{R}^{n_0}$ 
and the agents' dynamics $x_i\in\mathbb{R}^{n_i}$, $i=1, 2$, 
at time $t\in[t_0,t_f]$, 
and 
$u=( u_1^\top, u_2^\top )^\top$ is the local control inputs, respectively; 
$\beta_t$ is the disturbance modeled by a standard Wiener process on $[t_0,t_f]$; 
$\E_{t,x}$ indicates an expectation given initial data $(t,x)$; 
we use an abbreviation like $x_{0t}=x_0(t)$, $x_t=x(t)$. 
The dynamics of the agent $i$ ($i=1,2$) obeys 
\begin{equation}
dx_{it} = ( A_i(t) x_{it} + B_i(t) u_{it} )dt + D_i(t) d\beta_t, 
\quad t_0\leq t\leq t_f, \ i=1,2, 
\label{eq:a_dyn}
\end{equation}
and is evaluated by the agent's revenue functional:
\begin{equation}
J_i(t,x;u) = \E_{t,x} 
\left[ \varphi_i(t_f, x_{t_f} ) + \int_{t}^{t_f} l_i( \tau, x_\tau, u_\tau ) d\tau \right], \quad i=1,2. 
\end{equation}
The agent's state $x_i$ indicates typically 
the deviation of power generation or consumption from the scheduled trajectory; 
the control $u_i$ compensates the deviation. 
An admissible control of agent $i$, denoted as $u_i\in\Gamma_i$, 
is a state feedback $u_{it}=u_i(t,x)$ denoted by 
$u_i:[t_0,t_f]\times \mathbb{R}^n \to U_i \subset \mathbb{R}^{m_i}$ 
is continuous at $t\in[t_0,t_f]$ and Lipschitz continuous at $x\in\mathbb{R}^n$, 
$i=1,2$. 
To simplify the description in the following, 
let us describe the grid dynamics by combining the utility dunamics (\ref{eq:u_dyn}) 
and the agent's dynamics (\ref{eq:a_dyn}) as follows: 
\begin{equation}
d x_t 
= f(t,x_t,u_t) dt + D(t) d\beta_t 
:= \left( f_0(t,x_t) + f_1(t,x_{1t},u_{1t}) + f_2(t,x_{2t},u_{2t}) \right) dt
+ D(t) d\beta_t
\end{equation}
where
\begin{equation*}
f_0 = 
\left( \begin{array}{c}
A_{00} x_0 + A_{01} x_1 + A_{02} x_2 \\
\0 \\
\0
\end{array} \right), 
\ 
f_1 = 
\left( \begin{array}{c}
\0 \\
A_{1} x_1 + B_{1} u_1 \\
\0 
\end{array} \right), \ 
f_2 = 
\left( \begin{array}{c}
\0 \\
\0 \\
A_{2} x_2 + B_{2} u_2
\end{array} \right), \ 
D = 
\left( \begin{array}{c}
D_0 \\
D_1 \\
D_2 
\end{array} \right). 
\end{equation*}

We need the following assumptions, 
which make the discussions in this paper mathematically regorous. 
The notations 
$\nabla_t=\partial/\partial t$, 
$\nabla_x=(\partial/\partial x_0, \partial/\partial x_1, \partial/\partial x_2)$ 
and 
$\nabla_x^2=[ \partial^2/\partial x_i \partial x_j ]$ 
are used. 
\begin{description}
\item[(A1)]~Each element of matrices 
$A_{00}(t)$, $A_{0i}(t)$, $A_i(t)$, $B_i(t)$, $D_0(t)$, $D_i(t)$, $i=1,2$, 
are continuous at $t\in[t_0,t_f]$, and $D(t)D(t)^\top > 0$ for all $t\in[t_0,t_f]$.
\item[(A2)]~The set $U_i$, $i=1,2$, are compact and convex. 
\item[(A3)]~The function $\varphi_0(t_f,\cdot):\mathbb{R}^n\to \mathbb{R}$ is 
of class $\C^2$ and $\nabla_x \varphi_0(t_f,\cdot)$ is polynomial growth. 
The function $l_0:[t_0,t_f]\times \mathbb{R}^n\times U_1 \times U_2 \to \mathbb{R}$ is 
$\C^1$ at $(t,u_1,u_2)\in[t_0,t_f]\times U_1 \times U_2$ 
and $\C^2$ at $x\in\mathbb{R}^n$, 
and $\nabla_x l_0$, $\nabla_{u_i} l_0$, $i=1,2$, 
are polynomial growth at $(x,u_1,u_2)\in\mathbb{R}^n\times U_1 \times U_2$. 
\item[(A4)]~The function $\varphi_i(t_f,\cdot):\mathbb{R}^{n_i}\to \mathbb{R}$ is 
of class $\C^2$ and $\nabla_x \varphi_i(t_f,\cdot)$ is polynomial growth. 
The function $l_i:[t_0,t_f]\times \mathbb{R}^{n_i}\times U_i \to \mathbb{R}$ is 
$\C^1$ at $t\in[t_0,t_f]$ 
and $\C^2$ at $(x_i,u_i)\in\mathbb{R}^{n_i}\times U_i$, 
and $\nabla_{x_i} l_i$, $\nabla_{u_i} l_i$ 
are polynomial growth at $(x_i,u_i)\in\mathbb{R}^{n_i}\times U_i$ 
and $\nabla_{u_i}^2 l_i < 0$, $i=1,2$. 
\end{description}

We formulated the grid model with the evaluation functionals 
on the finite time interval $[t_0,t_f]$. 
For simplicity, from now on, 
we consider the state feedback strategies $u=( u_1^\top, u_2^\top )^\top$ 
derived by dynamic programming. 
We will discuss possibilities of the other options in Section~\ref{sec:4}. 
To achieve the objective, 
we reformulate our problems on the future time interval from the current time $t$ 
to the final time $t_f$ based on the time-consistency property.

To describe formulas concisely, 
we adopt the continuous-time model in this article; 
we can develop in parallel the same results in the discrete-time model. 
On the other hand, 
to develop our discussion in the continuous-time model in a mathematically sound way, 
we need some technical assumptions as stated above 
and in the later discussion; 
however, the assumptions except that on convexity (or concavity) 
are for assuring an appropriate smoothness and boundness of the variables appearing 
in the discussions, 
but not essential for developing our key ideas.

\subsection{Incentivizing Market Model}

To describe market mechanism, we need to specify participant's private information. 
Private information of agent $i=1,2$ consists of model information 
$\Xi_i=(f_i,\varphi_i,l_i)$ and on-line information $Z_{it} \subset \{ x_{it}^{t_f}, u_{it}^{t_f} \}$, 
where 
$x_{it}^{t_f} := \{ x_{i\tau}, t\leq \tau \leq t_f \}$ and 
$u_{it}^{t_f} := \{ u_{i\tau}, t\leq \tau \leq t_f \}$.

To incentivize agent's behavior in market model, 
we (or a market planner) use a reward (salary) functional of the following form. 
The reward (salary) functional:
\begin{eqnarray}
W_i^w(t,x_t^{t_f};u)
= 
w_{if}(t_f,x_{t_f}) + w_{i0}(t,x) 
+ \int_{t}^{t_f} w_{i1}(\tau,x_\tau) d\tau
+ \int_{t}^{t_f} w_{i2}(\tau,x_\tau) dx_\tau, 
\quad
t_0 \leq t \leq t_f, \ i=1,2, 
\end{eqnarray}
are defined along with the grid dynamics
$ d x_\tau = f(\tau,x_\tau,u_\tau) d\tau + D(\tau) d\beta_\tau$
where 
$w=(w_1,w_2)$ and $w_i=(w_{if},w_{i0},w_{i1},w_{i2})$. 
Admissible parameters of the reward functional, 
denoted as $w=(w_1,w_2)\in\Pi\times \Pi$, 
are defiend by:
$w_{if}(t_f,\cdot): \mathbb{R}^n\to \mathbb{R}$ is of class $\C^2$ 
and $\nabla_x w_{if}(t_f,\cdot)$ is polynomial growth; 
$w_{i0}:[t_0,t_f]\times\mathbb{R}^n\to \mathbb{R}$ is continuous 
at $(t,x)\in[t_0,t_f]\times\mathbb{R}^n$; 
$w_{i1}:[t_0,t_f]\times\mathbb{R}^n\to \mathbb{R}$ 
are $\C^1$ at $t\in[t_0,t_f]$ and $\C^2$ at $x\in\mathbb{R}^n$, 
and $\nabla_x w_{i1}$ is polynomial growth at $x\in\mathbb{R}^n$; 
$w_{i2}:[t_0,t_f]\times\mathbb{R}^n\to \mathbb{R}^{1\times n}$ is of class $\C^1$ 
and polynomial growth at $x\in\mathbb{R}^n$, and $\nabla_x w_{i2}$ is bounded. 
We use the notation $W_i^w$ so as to emphasize 
the dependence of $W_i$ on the choice of the parameter $w$. 
In the following discussion, 
we often use the same notation to show such parameter dependence. 
We try to express the parameter $w$ with another parameter $h$, 
which we call the price, 
so that the reward functional depends on the choice of the price $h$; 
then such dependence is also denoted as $W_i^h$.

The reward functionals together with the utility's revenue functional 
and the agent's revenue functional define the social welfare functional as:
\begin{equation}
I^w(t,x;u) = J_0(t,x;u) 
- \E_{t,x} \left[ W_1^w(t,x_t^{t_f};u) \right]
- \E_{t,x} \left[ W_2^w(t,x_t^{t_f};u) \right]
\end{equation}
and the agent's profit functional as: 
\begin{eqnarray}
I_i^w(t,x;u) = J_i(t,x;u) 
+ \E_{t,x} \left[ W_i^w(t,x_t^{t_f};u) \right], 
\quad
i=1,2.
\end{eqnarray}

A market planner designs a market mechanism with incentivizing structures 
and makes auction rules as well, 
based on the evaluation functionals and the grid model information 
introduced so far; 
the auction is performed in the following five steps: 
\begin{description}
\item[Step 1:]~Utility announces the auction system, and agents decide participation.
\item[Step 2:]~Agent offers his/her bid based on his/her own private information.
\item[Step 3:]~Based on agents' bids, price is determined so as to maximize social welfare. 
\item[Step 4:]~Agent decides his/her control to maximize his/her own profit based on price.
\item[Step 5:]~Utility pay rewards to agents.
\end{description}
Note that Steps 2, 3 and 4 will be performed continuously over a finite time interval.

\section{Model-based One-shot Market Mechanism}
\label{sec:3}

\subsection{Reward Design for Incentivizing}
\label{sec:3.1}

Components of our market model and their general interplay have been described in the previous section. 
To complete our market model, we need to fix a concrete shape of agents' bidding, 
and design reward functionals by choosing their characterizing parameter, 
called the reward parameter, $w=(w_1,w_2)$. 
First, let us specify agents' private information to be bidden 
in the market model discussed here: 
\emph{Each agent's model information $\Xi_i=(f_i,\varphi_i,l_i)$ is sent a priori to utility, 
and each agent's on-line information to be bidden is just 
the current state, i.e., $Z_{it}=x_{it}$, 
which means that utility cannot access control input $u_i$.} 
Then, the design problem of our market is reduced to a social welfare maximization problem, 
called the reward design problem, 
subject to the constraints that provide the market with two incentivizing functions by rewards, 
which is formulated as follows:
\begin{eqnarray}
&& 
\max_{u\in\Gamma_1\times\Gamma_2,\ w\in\Pi\times\Pi} 
I^w(t,x;u)
\nonumber \\
& \mbox{subject to} &
\nonumber \\
& \mbox{(Constraint 1)} & 
I^w_1(t,x;u) = \max_{v_1\in\Gamma_1} I_1^w(t,x;v_1,u_2), \quad
I^w_2(t,x;u) = \max_{v_2\in\Gamma_2} I_2^w(t,x;u_1,v_2), 
\nonumber \\
& \mbox{(Constraint 2)} &
I^w_1(t,x;u) \geq k_1(t,x), \qquad\qquad\qquad\
I^w_2(t,x;u) \geq k_2(t,x), 
\nonumber 
\end{eqnarray}
where $k_i:[t_0,t_f]\times\mathbb{R}^n\to \mathbb{R}$ is 
continuous at $(t,x)\in[t_0,t_f]\times\mathbb{R}^n$. 
By solving this problem, 
we obtain the optimal reward functional with two incentive functions 
and the agents' optimal controls. 
Constraint 1 claims that 
the reward incentivizes each agent's behavior to adopt the optimal control 
that maximizes her own profit and, 
in other words, constitutes a Nash equilibrium together with the other agent's control. 
This also implies that, since the utility holds the bidden models, 
the utility can know the control profile, 
even if it is not bidden. 
On the other hand, Constraint 2 assures a prescribed level of each agent's profit. 
The above formulation is an application of the moral hazard problem 
in contract theory \cite{bib10,bib11} to our market design problem; 
using terminology of contract theory, 
we call Constraint 1 and Constraint 2 
the incentive compatibility constraint and the individual rationality constraint, respectively. 
The conventional contract (moral hazard) problems analyzed 
for static systems and dynamical systems with control inputs directly 
operated by the principal \cite{bib10}. 
The analysis is, however, in discord with the incentivizing market. 
The main challenge of our incentivizing market design is 
to reformulate the moral problems adapted to the market as above 
and synthesize the proposed market from the system and control perspective.

\subsection{Solutions for General Reward Design}
\label{sec:3.2}

To solve the reward design problem, 
we start specifying a form of the reward functionals by using Constraints 1 and 2. 
For a parameter
$w=(w_1,w_2)\in\Pi\times\Pi$, 
let $(u_1^w,u_2^w)$ be a pair of optimal controls 
(a Nash equilibrium in $\Gamma_1\times\Gamma_2$) 
defined by $u_i^w=\arg\max_{u_i\in\Gamma_i} I_i^w(t,x;u_i,u_{-i}^w)$, 
$i=1,2$, 
so that Constraint 1 is fulfilled, 
where $u_{-1}:=u_2$ and $u_{-2}:=u_1$.
Then, as shown in Appendix, 
the Hamilton-Jacobi-Bellman (HJB) equations for the value functions 
\begin{equation}
V_i^w(t,x) = \max_{u_i\in\Gamma_i} I_i^w(t,x;u_i,u_{-i}^w) - w_{i0}(t,x), 
\quad i=1,2,
\label{eq:agent_Vfunc}
\end{equation}
lead the reward functional to the form:
\begin{eqnarray}
\!\!\!\!\!\!\!\!\!\!\!\! && 
W_i^w(t,x_t^{t_f};u_i,u_{-i}^w)
= h_{i0}^w(t,x) - \varphi_i(t_f,x_{it_f})
\nonumber \\
\!\!\!\!\!\!\!\!\!\!\!\! && \qquad
- \int_{t}^{t_f} \left[ h_{i1}^w(\tau,x_t)f(\tau,x_\tau,u_1^w(\tau,x_\tau),u_2^w(\tau,x_\tau)) 
+ l_i(\tau,x_{it},u_i^w(\tau,x_\tau)) \right] d\tau
+ \int_{t}^{t_f} h_{i1}^w(\tau,x_{\tau}) dx_{\tau}
\label{eq:V-W}
\end{eqnarray}
along with 
$dx_{\tau} = f(\tau,x_\tau,u_i(\tau,x_\tau),u_{-i}^w(\tau,x_\tau))d\tau+D(\tau)d\beta_\tau$, 
$t\leq \tau \leq t_f$, 
where $h_{i0}^w$ and $h_{i1}^w$ are defined by
\begin{subequations}
\begin{eqnarray}
h_{i0}^w(t,x) &=& V_i^w(t,x) + w_{i0}(t,x) 
\label{eq:V-h0}
\\
h_{i1}^w(t,x) &=& \nabla_x V_i^w(t,x) + w_{i2}(t,x)
\label{eq:V-h1}
\end{eqnarray}
\label{eq:V-h}
\end{subequations}
Moreover, $u_i^w$, $i=1,2$, which constitute a Nash equilibirum, 
must satisfy
\begin{eqnarray}
u_i^w(\tau,x) 
&=& 
\arg\max_{u_i\in U_i} 
\left[ h_{i1}^w(\tau,x) f(\tau,x,u_i,u_{-i}^w(\tau,x)) + l_i(\tau,x_i,u_i) + w_{i1}(\tau,x) \right]
\nonumber \\
&=& 
\arg\max_{u_i\in U_i} 
\left[ h_{i1}^w(\tau,x) f_i(\tau,x_i,u_i) + l_i(\tau,x_i,u_i) \right]
\nonumber 
\end{eqnarray}
so that a function $\mu_i$ given in Lemma~\ref{lem:01} 
provides uniquely $u_i^w$ with the expression of an explicit dependence on $h_{i1}^w$ 
such that 
$u_i^w(\tau,x) = \mu_i( \tau, x_i, h_{i1}^w(\tau,x) )$. 
For simplicity of notation, 
we will denote sometimes $\mu_i(\tau,x_i,h_{i1}^w(\tau,x))$ by $\mu_i^{h_{i1}^w}(\tau,x)$.

\begin{lemma}
\label{lem:01}
There exists a unique function $\mu_i$ that satisfies 
\begin{eqnarray}
\mu_i(\tau,x_i,p_i) = 
\arg\max_{u_i\in U_i} \left[ p_i f_i(\tau,x_i,u_i) + l_i(\tau,x_i,u_i) \right], 
\quad i=1,2, 
\end{eqnarray}
for each $(\tau,x_i,p_i) \in [t_0,t_f] \times \mathbb{R}^{n_i} \times \mathbb{R}^{1 \times n}$ 
such that $\mu_i$ is continuous at $(\tau,x_i,p_i)$ and Lipschitz continuous at $(x_i,p_i)$. 
\end{lemma}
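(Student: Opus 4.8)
The plan is to prove existence and uniqueness of the maximizer $\mu_i$ by showing that the objective function
\[
G_i(\tau,x_i,p_i,u_i) := p_i f_i(\tau,x_i,u_i) + l_i(\tau,x_i,u_i)
\]
is strictly concave in $u_i$ over the compact convex set $U_i$, then invoking a standard maximum theorem to get continuity and an envelope/implicit-function argument to get Lipschitz dependence. First I would verify the hypotheses of Weierstrass: by (A2) the set $U_i$ is compact, and by (A4) the running cost $l_i$ is $\C^2$ in $(x_i,u_i)$; since $f_i(\tau,x_i,u_i)=A_i(\tau)x_i+B_i(\tau)u_i$ is affine (hence continuous) in $u_i$ by the grid model, the map $u_i\mapsto G_i$ is continuous, so a maximizer exists for each $(\tau,x_i,p_i)$.

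Next I would establish uniqueness via strict concavity. Because $f_i$ is \emph{affine} in $u_i$, the term $p_i f_i(\tau,x_i,u_i)$ is linear in $u_i$ and contributes nothing to the second derivative, so $\nabla_{u_i}^2 G_i = \nabla_{u_i}^2 l_i < 0$ by the concavity assumption built into (A4). Strict concavity of $G_i$ on the convex set $U_i$ forces the maximizer to be unique, which \emph{defines} $\mu_i(\tau,x_i,p_i)$ as a genuine (single-valued) function rather than merely a correspondence. Continuity of $\mu_i$ jointly in $(\tau,x_i,p_i)$ then follows from Berge's maximum theorem, whose hypotheses hold here since the feasible set $U_i$ is constant (hence trivially continuous as a correspondence) and $G_i$ is jointly continuous; single-valuedness upgrades the upper-hemicontinuous argmax correspondence to a continuous function.

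The remaining and most delicate step is the Lipschitz continuity of $\mu_i$ in $(x_i,p_i)$, which I expect to be the main obstacle since Berge's theorem alone gives only continuity. The plan is to characterize $\mu_i$ through the first-order optimality condition. When the maximizer lies in the interior of $U_i$ this is $\nabla_{u_i} G_i = p_i B_i(\tau) + \nabla_{u_i} l_i(\tau,x_i,u_i) = 0$; since the Jacobian of this equation in $u_i$ equals $\nabla_{u_i}^2 l_i$, which is negative definite and hence invertible, the implicit function theorem yields a locally $\C^1$ (thus locally Lipschitz) dependence of $u_i$ on $(x_i,p_i)$. For the boundary/active-constraint case one must instead use the variational inequality characterization of the maximizer together with the uniform strong monotonicity coming from $\nabla_{u_i}^2 l_i \le -c I < 0$: subtracting the optimality conditions at two parameter values and exploiting strong monotonicity gives an estimate of the form $c\,\|\mu_i(\cdot,x_i',p_i') - \mu_i(\cdot,x_i,p_i)\|^2 \le (\text{terms linear in } \|x_i'-x_i\|,\|p_i'-p_i\|)$, from which the global Lipschitz bound follows. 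I would need the polynomial-growth and smoothness clauses of (A4) only to guarantee that the relevant gradients are well behaved; the quantitative Lipschitz constant is furnished entirely by the uniform concavity modulus of $l_i$ and the continuity (hence local boundedness on $[t_0,t_f]$) of $B_i(\tau)$ from (A1).
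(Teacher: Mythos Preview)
Your argument is correct and follows the same conceptual line as the paper: strict concavity of $G_i$ in $u_i$ (from $\nabla_{u_i}^2 l_i<0$ and the affinity of $f_i$) gives a unique maximizer, uniqueness plus joint continuity yields continuity of $\mu_i$, and uniform strong concavity yields Lipschitz dependence on $(x_i,p_i)$. The only difference is in presentation: the paper's proof is two sentences, deducing continuity directly from uniqueness of the maximum and then \emph{citing} Lemma~VI.6.3 of Fleming--Rishel \cite{bib12} for the Lipschitz property, whereas you unpack that citation by giving the explicit variational-inequality/strong-monotonicity estimate. Your route is therefore more self-contained but not genuinely different; what it buys is that the reader sees where the Lipschitz constant comes from (the modulus of concavity of $l_i$ and the bound on $B_i(\tau)$), while the paper's route is shorter but relies on the reader knowing the cited lemma. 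One small caveat: assumption (A4) only asserts $\nabla_{u_i}^2 l_i<0$, not a uniform bound $\leq -cI$ globally in $x_i$, so your estimate should be read as giving \emph{local} Lipschitz continuity in $(x_i,p_i)$, which is exactly what the lemma claims and what Fleming--Rishel's result provides.
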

\begin{proof}
The continuity at $(\tau,x_i,p_i)$ follows from the uniqueness of the maximum. 
The Lipschitz continuity is shown by Lemma VI.6.3 in \cite{bib12}. 
\end{proof}

Now, summarizing the above observation, 
we see that, in solving the reward design problem, 
Costraint 1 enables us to limit a search of the optimal reward functional 
to the class of the form (\ref{eq:V-W}). 
In this form of the reward functional, 
$h_{i0}^w$ and $h_{i1}^w$ are given by (\ref{eq:V-h0}) and (\ref{eq:V-h1}), respectively, 
which implies that they depend on a choice of the parameter 
$w\in(w_1,w_2)\in\Pi\times\Pi$. 
We can show that this class of reward functionals is invariant, 
even if the class of parameters $h_i^w=(h_{i0}^w,h_{i1}^w)$ is generalized 
to a class where dependence on the parameter $w$ is not necessarily required. 
For this purpose, 
let $h_i=(h_{i0},h_{i1})$ and define a class of reward parameters 
$h=(h_1,h_2)\in H \times H$ such that 
$h_{i0}:[t_0,t_f]\times\mathbb{R}^n\to\mathbb{R}$ 
is continuous at $(t,x)\in[t_0,t_f]\times\mathbb{R}^n$; 
$h_{i1}:[t_0,t_f]\times\mathbb{R}^n\to\mathbb{R}^{1\times n}$ 
is of class $\C^1$ and polynomial growth at $x\in\mathbb{R}^n$, 
and $\nabla_x h_{i1}$ is bounded. 
Note that $h^w=(h_1^w,h_2^w)\in H\times H$ 
for any $w=(w_1,w_2)\in\Pi\times\Pi$ 
if $(V_i^w,\nabla_x V_i^w)$ is in the class $H$.

\begin{proposition} 
\label{prop:01}
(a) 
A pair of controls $(u_1^w,u_2^w)$ constitutes 
a Nash equilibrium satisfying Constraint 1 
for a pair of reward functionals $(W_1^w,W_2^w)$ 
with a parameter $w=(w_1,w_2)\in\Pi\times\Pi$ 
and the corresponding pair of value functions $(V_1^w,V_2^w)$ satisfies 
the condition that $(V_i^w,\nabla_x V_i^w)$, $i=1,2$ 
are in the class $H$, 
only if there is a parameter $h=(h_1,h_2)\in H \times H$ 
such that the pair of reward functionals has the form 
\begin{eqnarray}
\!\!\!\!\!\!\!\!\!\!\!\! &&
W_i^h(t,x_t^{t_f};u_i,\mu_{-i}^{h_{-i1}})
= h_{i0}^w(t,x) - \varphi_i(t_f,x_{it_f})
\nonumber \\
\!\!\!\!\!\!\!\!\!\!\!\! && \qquad
- \int_{t}^{t_f} \left[ h_{i1}(\tau,x_t)f(\tau,x_\tau,\mu_1^{h_{11}}(\tau,x_\tau),\mu_{2}^{h_{21}}(\tau,x_\tau)) 
+ l_i(\tau,x_{it},\mu_i^{h_{i1}}(\tau,x_\tau)) \right] d\tau
+ \int_{t}^{t_f} h_{i1}(\tau,x_{\tau}) dx_{\tau}
\label{eq:h-W}
\end{eqnarray}
along with 
$ dx_\tau = f(\tau,x_\tau,u_i(\tau,x_\tau),\mu_{-i}^{h_{-i1}}(\tau,x_\tau))d\tau + D(\tau)d\beta_\tau$, 
$i=1,2$. 
\\
(b) 
For the reward functionals (\ref{eq:h-W}) 
with a parameter $h=(h_1,h_2)\in H\times H$, 
a pair of controls $(u_1,u_2)\in\Gamma_1\times\Gamma_2$ 
is a Nash equilibrium if and only if it has the form 
\begin{equation}
u_i(\tau,x) = \mu_i( \tau, x_i, h_{i1}(\tau,x) ), \quad i=1,2. 
\label{eq:Nash_mu}
\end{equation}
(c)
For the reward functionals (\ref{eq:h-W}) 
with a parameter $h=(h_1,h_2)\in H\times H$ 
and the Nash equilibrium (\ref{eq:Nash_mu}), 
Constraint 2 is fulfilled if and only if 
$h_{i0}$, $i=1,2$, are specified such as 
$h_{i0}(t,x) \geq k_{i}(t,x)$. 
\end{proposition}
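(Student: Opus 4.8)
The plan is to prove part (b) first by a direct substitution of the reward form~(\ref{eq:h-W}) into the profit functional, and then read off (a) and (c) as consequences. First I would evaluate $I_i^h(t,x;u_i,u_{-i})=J_i(t,x;u)+\E_{t,x}[W_i^h(t,x_t^{t_f};u_i,u_{-i})]$ along $dx_\tau=f(\tau,x_\tau,u_i,u_{-i})d\tau+D(\tau)d\beta_\tau$. Since the It\^o integral $\int_t^{t_f}h_{i1}(\tau,x_\tau)D(\tau)d\beta_\tau$ is a martingale with zero expectation, I would replace $\E_{t,x}[\int_t^{t_f}h_{i1}\,dx_\tau]$ by $\E_{t,x}[\int_t^{t_f}h_{i1}\,f(\tau,x_\tau,u_i,u_{-i})\,d\tau]$. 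The terminal payoffs $\varphi_i(t_f,x_{it_f})$ in $J_i$ and in~(\ref{eq:h-W}) cancel, and the additive structure $f=f_0+f_1+f_2$ (with $f_0$ control-free and $f_j$ depending only on $(x_j,u_j)$) lets me combine the remaining drift integrands into

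\begin{equation}
I_i^h(t,x;u_i,u_{-i}) = h_{i0}(t,x) + \E_{t,x}\!\left[ \int_t^{t_f} \bigl( G_i + C_i \bigr)\, d\tau \right],
\label{eq:profit-reduced}
\end{equation}

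where $G_i := h_{i1}f_i(\tau,x_i,u_i)+l_i(\tau,x_i,u_i)-\max_{v\in U_i}\bigl(h_{i1}f_i(\tau,x_i,v)+l_i(\tau,x_i,v)\bigr)\le 0$ and the cross term $C_i := h_{i1}\bigl(f_{-i}(\tau,x_{-i},u_{-i})-f_{-i}(\tau,x_{-i},\mu_{-i}^{h_{-i1}})\bigr)$ vanishes identically whenever $u_{-i}=\mu_{-i}^{h_{-i1}}$.

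With~(\ref{eq:profit-reduced}) in hand, part (b) follows. For the ``if'' direction, suppose the opponent uses $u_{-i}=\mu_{-i}^{h_{-i1}}$, so $C_i\equiv 0$; by Lemma~\ref{lem:01} the integrand $G_i$ is nonpositive and attains its pointwise maximum value $0$ exactly at $u_i=\mu_i(\tau,x_i,h_{i1})$, independently of the realized trajectory, so the feedback $\mu_i^{h_{i1}}$ is a best response and $I_i^h=h_{i0}(t,x)$. Applying this to both $i=1,2$ shows that $(\mu_1^{h_{11}},\mu_2^{h_{21}})$ is a Nash equilibrium. For the ``only if'' direction I would invoke Assumption~(A4): because $f_i$ is affine in $u_i$ and $\nabla_{u_i}^2 l_i<0$, the bracket $h_{i1}f_i+l_i$ is strictly concave in $u_i$, so $\mu_i$ is the \emph{unique} pointwise maximizer and hence any best response must coincide with $\mu_i^{h_{i1}}$ almost everywhere, giving the form~(\ref{eq:Nash_mu}).

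Part (a) is then essentially a repackaging of the Appendix derivation: equation~(\ref{eq:V-W}) already exhibits the representation with $h_i^w=(h_{i0}^w,h_{i1}^w)$ given by~(\ref{eq:V-h}), the remark preceding the proposition guarantees $h^w\in H\times H$ once $(V_i^w,\nabla_x V_i^w)\in H$, and the Nash condition identifies $u_i^w=\mu_i^{h_{i1}^w}$ through Lemma~\ref{lem:01}; choosing $h:=h^w$ exhibits the required parameter and turns~(\ref{eq:V-W}) into~(\ref{eq:h-W}). Part (c) is immediate from~(\ref{eq:profit-reduced}): evaluating at the Nash equilibrium~(\ref{eq:Nash_mu}) forces $G_i\equiv 0$ and $C_i\equiv 0$, so $I_i^h(t,x;\mu_1^{h_{11}},\mu_2^{h_{21}})=h_{i0}(t,x)$, whence Constraint~2 reads $h_{i0}(t,x)\ge k_i(t,x)$.

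The step I expect to be the main obstacle is the ``only if'' part of (b): a priori a deviation in $u_i$ perturbs the shared trajectory (the coupling enters through the utility state $x_0$, and, if the opponent uses full-state feedback, through the dependence of $\mu_{-i}^{h_{-i1}}(\tau,x_\tau)$ on $x$), so it could in principle move the cross term $C_i$ and make a non-$\mu_i$ control profitable. The resolution I would carry out is to exploit that $C_i$ is identically zero precisely at the candidate equilibrium profile $u_{-i}=\mu_{-i}^{h_{-i1}}$; evaluating each agent's best-response problem against the opponent's equilibrium strategy kills the cross term and leaves only the strictly concave, pointwise-maximized $G_i$, so (A4) pins down the unique best response and the fixed point $(\mu_1^{h_{11}},\mu_2^{h_{21}})$ is the unique Nash equilibrium. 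Making this rigorous requires a verification-theorem argument, for which the polynomial-growth and boundedness conditions in the definition of $H$ and in (A3)--(A4) ensure that the relevant expectations and It\^o integrals are well defined; I would invoke these standard estimates rather than redo them.
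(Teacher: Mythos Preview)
Your argument for parts (b) and (c) is correct and is essentially the paper's own: substitute the reward form~(\ref{eq:h-W}) into $I_i^h$, kill the It\^o integral in expectation, cancel the terminal costs, and reduce to a pointwise maximization uniquely solved by $\mu_i$ via Lemma~\ref{lem:01} and the strict concavity in (A4). Your ``cross term'' $C_i$ is an artifact of writing the profit with a generic opponent $u_{-i}$; the paper simply fixes $u_{-i}=\mu_{-i}^{h_{-i1}}$ from the outset (that being the definition of a best response), and then the two drift integrands differ only in the $f_i+l_i$ slot, which is exactly your $G_i$. So the obstacle you anticipate in the final paragraph does not arise: once the opponent is held at the candidate equilibrium strategy, there is no coupling term to control, and the pointwise strict concavity pins down the unique best response directly.

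The one substantive difference from the paper is in part (a). You prove the literal ``only if'' by setting $h:=h^w$ from~(\ref{eq:V-h}), which is correct and sufficient for the statement as written. The paper's proof goes further and establishes the converse as well: for \emph{every} $h\in H\times H$ it exhibits a parameter $\bar w\in\Pi\times\Pi$ (namely $\bar w_{if}=-\varphi_i$, $\bar w_{i0}=h_{i0}$, $\bar w_{i1}=-h_{i1}f(\cdot,\mu_1^{h_{11}},\mu_2^{h_{21}})-l_i(\cdot,\mu_i^{h_{i1}})$, $\bar w_{i2}=h_{i1}$) and checks that the associated HJB equation has the constant solution $V_i^{\bar w}\equiv 0$, whence $h^{\bar w}=h$. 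This shows the two parametrizations yield the \emph{same} class of reward functionals. That converse is not formally part of (a), but it is what justifies replacing the maximization over $w\in\Pi\times\Pi$ by the maximization over $h\in H\times H$ in Theorem~\ref{thm:01}; your version of (a) alone would only tell you that the $h$-class is at least as large.
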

\begin{proof}
(a) 
We have already seen that, 
for a chosen parameter $w=(w_1,w_2)\in\Pi\times\Pi$, 
the reward functionals for which $(u_1^w,u_2^w)$ 
constitutes a Nash equilibrium must have the form (\ref{eq:V-W}) 
with the parameters (\ref{eq:V-h0}) and (\ref{eq:V-h1}), 
and the Nash equilibrium must be given as 
$u_i^w(\tau,x)=\mu_i(\tau,x_i,h_{i1}^w(\tau,x))$, $i=1,2$. 
Now, let $h=(h_1,h_2)\in H\times H$ 
be chosen independently of $w$ and set a reward parameter 
$\bar{w}=(\bar{w}_1,\bar{w}_2)\in\Pi\times\Pi$ as
\begin{subequations}
\begin{eqnarray*}
\bar{w}_{if}(t_f,x) &=& 
- \varphi_i(t_f,x) 
\\
\bar{w}_{i0}(t,x) &=&
h_{i0}(t,x)
\\
\bar{w}_{i1}(t,x) &=& 
- h_{i1}(\tau,x) f(\tau,x, \mu_1^{h_{11}}(\tau,x),\mu_{2}^{h_{21}}(\tau,x)) - l_i(\tau,x_i,\mu_i^{h_{i1}}(\tau,x))
\\
\bar{w}_{i2} &=& 
h_{i1}(\tau,x).
\end{eqnarray*}
\end{subequations}
Then, we can show that, 
for these reward parameters, 
the HJB equation (\ref{eq:DP_V}) in Appendix has a unique constant solution of the form 
$V_i^{\bar{w}}(\tau,x)=0$, so that we have 
$h_{i0}^{\bar{w}}(t,x) = V_i^{\bar{w}}(t,x) + \bar{w}_{i0}(t,x) = h_{i0}(t,x)$ 
and 
$h_{i1}^{\bar{w}}(\tau,x) = \nabla_x V_i^{\bar{w}}(\tau,x) + \bar{w}_{i2}(\tau,x) = h_{i1}(\tau,x)$. 
This implies that the class of reward functionals given by 
(\ref{eq:V-W}) with (\ref{eq:V-h0}) and (\ref{eq:V-h1}) is invariant, 
even if the class of parameters $h^w=(h_1^w,h_2^w)$ depending on $w$ 
is generalized to $H\times H$, 
and proves the part (a) of this proposition. 
\\
(b) 
For reward functionals of the form (\ref{eq:h-W}) 
with a parameter $h=(h_1,h_2)\in H \times H$, 
profit functionals of the agent $i=1,2$ are represented as
{\arraycolsep=2pt
\begin{eqnarray}
I_i^h(t,x_t^{t_f};u_i,\mu_{-i}^{h_{-i1}}) 
&=& h_{i0}(t,x) 
- \int_t^{t_f} 
\left[ h_{i1}(\tau,x_\tau) f(\tau,x_\tau, \mu_1^{h_{11}}(\tau,x_\tau), \mu_{2}^{h_{21}}(\tau,x_\tau)) 
+ l_i(\tau,x_{i\tau},\mu_i^{h_{i1}}(\tau,x_\tau)) \right] d\tau
\nonumber \\
&&
+ \int_t^{t_f} 
\left[ h_{i1}(\tau,x_\tau) f(\tau,x_\tau, u_i(\tau,x_\tau), \mu_{-i}^{h_{-i1}}(\tau,x_\tau)) 
+ l_i(\tau,x_{i\tau},u_i(\tau,x_\tau)) \right]
d\tau.
\label{eq:I_h}
\end{eqnarray}}
From the definition of $\mu_i$ given in Lemma~\ref{lem:01}, 
the second (integral) term in the right hand side of the identity above is non-negative, 
and therefore the pair of controls $(u_1,u_2)\in\Gamma_1\times\Gamma_2$ 
is a Nash equilibrium if and only if 
$u_i=\mu_i^{h_{i1}}$, $i=1,2$. 
\\
(c) 
It is obvious because the identity (\ref{eq:I_h}) 
guarantees $I_i^h(t,x_t^{t_f};\mu_1^{h_{11}},\mu_{2}^{h_{21}})=h_{i0}(t,x)$, $i=1,2$. 
\end{proof}

A key message of part (a) in Proposition~\ref{prop:01} is that 
the original parameter $w=(w_1,w_2)$ can be replaced 
with the parameter $h=(h_1,h_2)$. 
We will see below that the parameter $h$ can be interpreted as a price (vector), 
and show that it enables us to introduce a dynamic contract, 
which realizes requisite incentives, in the market model. 
Another message from the parts (a) and (b) is that 
we can shift the Nash equilibrium (\ref{eq:Nash_mu}) freely to some extent 
by selecting the price (vector) $h$. 
The parameterization of reward (salary) functional with the parameter $h$ 
would be in itself a new result of interest in contract theory, 
which is different from the known types based on typically 
the so-called first order condition \cite{bib11,bib13,bib14} 
and the other types \cite{bib15,bib16} in the contract theory literatures. 
Finally, note that for proving this proposition 
we do not use the linearity of the grid model in the state, 
while we need the linearity and additivity in the controls in the grid model 
and the convexity (concavity) of the control ranges 
and the revenue functions in (A2) and (A4) as well.

Now, using Proposition~\ref{prop:01}, 
we can present an optimal control based approach, 
in which the parameter $h=(h_1,h_2)\in H \times H$ 
plays a role of control, to the reward design problem.

\begin{theorem}
\label{thm:01}
The reward design problem with the parameter $h=(h_1,h_2)\in H \times H$ 
is equivalent to an optimal control problem described by 
\begin{equation*}
\max_{(h_1,h_2)\in H\times H} 
I^h(t,x;\mu_1^{h_{11}},\mu_2^{h_{21}})
\end{equation*}
subject to 
\begin{equation*}
h_{i0}(t,x) \geq k_{i}(t,x), \quad i=1,2, 
\end{equation*}
and the stochastic state equation:
\begin{equation*}
dx_\tau = 
f(\tau,x_\tau, \mu_1^{h_{11}}(\tau,x_\tau),\mu_2^{h_{21}}(\tau,x_\tau))d\tau + D(\tau) d\beta_\tau,
\quad t \leq \tau \leq t_f,
\end{equation*}
where $\mu_i^{h_{i1}}(\tau,x)=\mu_i(\tau,x_i,h_{i1}(\tau,x))$, $i=1,2$. 
\end{theorem}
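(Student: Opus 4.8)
The plan is to recognize that Theorem~\ref{thm:01} is essentially an assembly of the three parts of Proposition~\ref{prop:01}: together they set up a complete dictionary between feasible points $(u,w)$ of the reward design problem and parameters $h=(h_1,h_2)\in H\times H$, under which both the constraints and the social welfare functional are preserved. I would organize the whole argument around the single correspondence $w\leftrightarrow h$ and verify, in turn, that it trivializes Constraint~1, converts Constraint~2 into the stated pointwise inequality, and leaves the objective value unchanged.

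First I would dispose of Constraint~1. Part~(a) of Proposition~\ref{prop:01} shows that any admissible $w\in\Pi\times\Pi$ whose equilibrium value functions satisfy $(V_i^w,\nabla_x V_i^w)\in H$ yields a reward functional of the form (\ref{eq:h-W}) for some $h\in H\times H$, with the Nash equilibrium forced to be (\ref{eq:Nash_mu}). Conversely, the explicit $\bar w$ construction inside that proof realizes every $h\in H\times H$ by some $\bar w\in\Pi\times\Pi$, and part~(b) guarantees that for the reward functional (\ref{eq:h-W}) the \emph{unique} Nash equilibrium is precisely $u_i=\mu_i^{h_{i1}}$, $i=1,2$. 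Consequently, once the search is re-expressed over $h$, Constraint~1 ceases to be an active restriction and is automatically met by the control (\ref{eq:Nash_mu}); maximizing over $w$ subject to Constraint~1 becomes an unconstrained (in this respect) maximization over $h\in H\times H$ with state dynamics driven by $(\mu_1^{h_{11}},\mu_2^{h_{21}})$.

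Next I would handle Constraint~2 and the objective simultaneously. Part~(c) of Proposition~\ref{prop:01}, obtained by evaluating (\ref{eq:I_h}) at the equilibrium, gives $I_i^h(t,x_t^{t_f};\mu_1^{h_{11}},\mu_2^{h_{21}})=h_{i0}(t,x)$, so the individual rationality requirements $I_i^w(t,x;u)\ge k_i(t,x)$ translate verbatim into $h_{i0}(t,x)\ge k_i(t,x)$, matching the constraint in the statement. For the objective, note that the reparameterization $w\mapsto h$ does not alter the reward functional itself (it is only re-expressed through $h$), and the equilibrium control $(\mu_1^{h_{11}},\mu_2^{h_{21}})$ drives exactly the state equation appearing in the theorem; hence $J_0$ and the expected rewards $\E[W_i]$ are unchanged, and $I^w(t,x;u^w)=I^h(t,x;\mu_1^{h_{11}},\mu_2^{h_{21}})$ by the definition of $I^h$. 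Combining the three matchings shows that the feasible set and the objective of the reward design problem coincide with those of the stated optimal control problem, which is the asserted equivalence.

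The step I expect to demand the most care is the feasibility correspondence for Constraint~1 -- specifically, confirming that $w\mapsto h$ is \emph{onto} $H\times H$, so that no admissible price is discarded, while honoring the regularity caveat $(V_i^w,\nabla_x V_i^w)\in H$ built into Proposition~\ref{prop:01}(a). This surjectivity is exactly what the explicit $\bar w$ construction supplies, together with the verification that its HJB equation admits the constant solution $V_i^{\bar w}\equiv0$. Once that is secured, the remaining identifications are routine substitutions, resting on the vanishing of the expectation of the It\^o integral $\int_t^{t_f} h_{i1}(\tau,x_\tau)D(\tau)\,d\beta_\tau$ that was already used to pass to (\ref{eq:I_h}).
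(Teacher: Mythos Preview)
Your proposal is correct and follows essentially the same approach as the paper: both proofs assemble the three parts of Proposition~\ref{prop:01} to translate the $(u,w)$-problem into the $h$-problem, with (a) supplying the reparameterization, (b) collapsing Constraint~1, and (c) reducing Constraint~2 to $h_{i0}\ge k_i$. The paper's version is terser and records the explicit expansion~(\ref{eq:I0_h}) of $I^h$ before citing (b) and (c), whereas you spell out the two-sided correspondence $w\leftrightarrow h$ more carefully; the content is the same.
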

\begin{proof}
From (a) of Proposition~\ref{prop:01} that 
the social welfare functional $I^h(t,x;\mu_1^{h_{11}},\mu_2^{h_{21}})$ 
is represented by 
\begin{eqnarray}
I^h(t,x;\mu_1^{h_{11}},\mu_2^{h_{21}}) 
&=& 
J_0(t,x;\mu_1^{h_{11}},\mu_2^{h_{21}}) 
- \E_{t,x} \left[ \sum_{i=1}^2 W_i^h( t, x_t^{t_f} ; \mu_1^{h_{11}},\mu_2^{h_{21}} ) \right]
\nonumber \\
&=& \E_{t,x} \left[ 
\varphi_0(t_f,x_{t_f}) + \int_t^{t_f} l_0(\tau,x_\tau,\mu_1^{h_{11}}(\tau,x_{\tau}),\mu_2^{h_{21}}(\tau,x_{\tau})) d\tau 
\right]
\nonumber \\
&&
+ \E_{t,x} \left[
\sum_{i=1}^2 \left( \varphi_i(t_f,x_{it_f}) + \int_t^{t_f} l_i(\tau,x_{i\tau},\mu_i^{h_{i1}}(\tau,x_\tau)) d\tau \right)
\right] - \sum_{i=1}^2 h_{i0}(t,x). 
\label{eq:I0_h}
\end{eqnarray}
Then, 
from (b) and (c) of Proposition~\ref{prop:01}, 
Constraints 1 and 2 are fulfilled, respectively, 
for any $h=(h_1,h_2)\in H\times H$. 
Thus we have this theorem. 
\end{proof}

The optimal solution $h^*=(h_1^*,h_2^*)\in H \times H$ 
leads to the Nash equilibrium $(\mu_1^{h_{11}^*},\mu_2^{h_{21}^*})$, $i=1,2$. 
Note that $h_{i0}^*(t,x)=k_{i}(t,x)$, $i=1,2$ 
follows from the expression (\ref{eq:I0_h}), and then Constraint 2 is fulfilled.

\section{Discussion through Typical Scenarios}
\label{sec:4}

It is generally difficult to solve the optimal control problem in Theorem~\ref{thm:01}. 
Here, focusing on some special cases, we discuss qualitative properties of the parameter $h$ 
and try to give economic meanings to the parameter and the reward functional.

Let the value function be denoted by 
\begin{equation}
V(t,x) = 
\sup_{(h_1,h_2)\in H\times\times H} 
I^h(t,x;\mu_1^{h_{11}},\mu_2^{h_{21}}) + h_{10}(t,x) + h_{20}(t,x).
\end{equation}
Then, the HJB equation is given by
\begin{subequations}
\arraycolsep=2pt
\begin{eqnarray}
\nabla_t V(t,x) &+& \frac{1}{2} \left[ \nabla_x^2 V(t,x) D(t) D(t)^\top \right]
\nonumber \\
&+& 
\sup_{(h_{11},h_{21}) \in \mathbb{R}^{1\times n}\times\mathbb{R}^{1\times n}} 
\Big[
\nabla_x V(t,x) f_0(t,x) 
+ l_0(t,x_i,\mu_1(t,x_1,h_{11}),\mu_2(t,x_2,h_{21})) 
\nonumber \\
&& \qquad
+ \sum_{i=1}^2 \left( 
\nabla_x V(t,x) f_i(t,x_i,\mu_i(t,x_i,h_{i1})) 
+ l_i(t,x_i,\mu_i(t,x_i,h_{i1}))
\right) \Big] 
= 0. 
\\
V(t_f,x_{t_f}) &=& \varphi_0(t_f,x_{t_f}) + \varphi_1(t_f,x_{1t_f}) + \varphi_2(t_f,x_{2t_f}).
\end{eqnarray}
\label{eq:DP_principal}
\end{subequations}
In this section, 
we discuss the relationship between the private information and the incentives, 
interpretation and limitation of our approach through three typical cases.

{\bf (A)} 
Consider the case when the utility evaluates only the grid state $x$ 
and does not evaluate the agents' control inputs $u_i$, $i=1,2$ 
such that $l_0=l_0(\tau,x)$.

\begin{corollary}
\label{coro:01}
In Case (A), 
if the HJB equation (\ref{eq:DP_principal}) 
has a solution $V(t,x)$ such that $(V,\nabla_x V)$ is in the class $H$, 
the optimal parameters $h_i^*=(h_{i0}^*,h_{i1}^*)\in H$, $i=1,2$ 
are given by $h_{i0}^*(t,x) = k_i(t,x)$ and $h_{i1}^*(t,x)=\nabla_x V(t,x)$, 
$t_0 \leq t \leq t_f$. 
\end{corollary}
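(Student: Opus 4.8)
The plan is to read off the two optimal parameters separately, using the structure of the social welfare functional (\ref{eq:I0_h}) for $h_{i0}^*$ and the verification logic behind the HJB equation (\ref{eq:DP_principal}) for $h_{i1}^*$. For $h_{i0}^*$ I would start from (\ref{eq:I0_h}) and note that every control appearing there is $\mu_i^{h_{i1}}$, so the integrand and terminal terms depend on $h_{i1}$ alone; the only occurrence of $h_{i0}$ is the additive term $-\sum_{i=1}^2 h_{i0}(t,x)$. Maximizing $I^h$ therefore drives $\sum_i h_{i0}(t,x)$ to its smallest feasible value, and since the constraint of Theorem~\ref{thm:01} is $h_{i0}(t,x)\geq k_i(t,x)$, it is binding, giving $h_{i0}^*(t,x)=k_i(t,x)$. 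This is exactly the remark recorded after Theorem~\ref{thm:01}, now specialized here.

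For $h_{i1}^*$ I would exploit Case (A) directly in the HJB equation (\ref{eq:DP_principal}). Since $l_0=l_0(\tau,x)$ and $f_0=f_0(t,x)$ carry no dependence on the control inputs, both $\nabla_x V\,f_0$ and $l_0$ are constant with respect to $(h_{11},h_{21})$ and drop out of the supremum, which reduces to the separable problem $\sup_{(h_{11},h_{21})}\sum_{i=1}^2\big[\nabla_x V\,f_i(t,x_i,\mu_i(t,x_i,h_{i1}))+l_i(t,x_i,\mu_i(t,x_i,h_{i1}))\big]$, decoupling into one independent maximization per agent. Fixing $i$ and writing $\psi_i(u_i)=\nabla_x V\,f_i(t,x_i,u_i)+l_i(t,x_i,u_i)$, the inner term equals $\psi_i(\mu_i(t,x_i,h_{i1}))$. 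Because $\mu_i(t,x_i,h_{i1})\in U_i$ for every $h_{i1}$, it is bounded above by $\max_{u_i\in U_i}\psi_i(u_i)$; and by the defining property of $\mu_i$ in Lemma~\ref{lem:01}, taking the price $p_i=\nabla_x V$ makes $\mu_i(t,x_i,\nabla_x V)$ precisely that $\psi_i$-maximizer over $U_i$. Hence the supremum is attained at $h_{i1}^*=\nabla_x V$, with induced control $\mu_i(t,x_i,\nabla_x V)$.

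Finally I would close with admissibility and verification: the hypothesis that $(V,\nabla_x V)$ lies in the class $H$ guarantees $h^*=(h_1^*,h_2^*)\in H\times H$, so the candidate is feasible in Theorem~\ref{thm:01}, and the standard verification argument for (\ref{eq:DP_principal}) then certifies that $V$ is the value function and $h^*$ optimal. I expect the main obstacle to be the mildly self-referential nature of the claim: the optimal price $h_{i1}^*$ is written through $\nabla_x V$ of the very value function whose HJB equation is being solved, so the argument must be phrased as characterizing the optimal feedback in terms of an assumed smooth solution $V$, rather than as a circular definition. Once Case (A) is invoked, the decoupling and the matching of the price $\nabla_x V$ with the pointwise maximization that defines $\mu_i$ are immediate, so that matching is the only genuine content of the proof.
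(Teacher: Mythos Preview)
Your proposal is correct and follows essentially the same route as the paper: the paper also notes $h_{i0}^*=k_i$ is already established, then uses Case (A) to reduce the supremum in the HJB equation (\ref{eq:DP_principal}) to the decoupled sum $\sum_{i=1}^2\sup_{h_{i1}}[\nabla_x V\,f_i+l_i]$, invokes Lemma~\ref{lem:01} to identify $h_{i1}^*=\nabla_x V$, and closes with the verification theorem. Your write-up is more explicit about why $l_0$ and $f_0$ drop out and why admissibility of $h^*$ follows from the hypothesis on $(V,\nabla_x V)$, but the logical skeleton is identical.
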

\begin{proof}
$h_{i0}^*(t,x) = k_i(t,x)$ is already noted. 
In Case (A), 
the maximization in the HJB equation (\ref{eq:DP_principal}) becomes 
\begin{eqnarray*}
\sum_{i=1}^2 \sup_{h_{i1}\in\mathbb{R}^{1\times n}} 
\left[ \nabla_x V(t,x) f_i(t,x_i,\mu_i(t,x_i,h_{i1})) 
+ l_i(t,x,\mu_i(t,x_i,h_{i1})) \right], 
\end{eqnarray*}
and it follows from Lemma~\ref{lem:01} that 
the maximum is attained by $h_{i1}^*(t,x)=\nabla_x V(t,x)$, $i=1,2$. 
Then, the verification theorem \cite[Theorem VI.4.1]{bib11} 
verifies the optimality of the parameter. 
\end{proof}

The fact $h_{i1}^*(t,x) = \nabla_x V(t,x)$, $i=1,2$, 
shown in Corollary~\ref{coro:01} implies that 
the reward parameter $h_{11}^*(t,x) (=h_{21}^*(t,x))$ 
can be regarded as a price of quantity $x$ at time $t$; 
$\nabla_x V(t,x)$ is actually called the shadow price in economics literatures, 
and our parametrization of the reward functional could be suitable for the market model. 
Note that the form of the utility's revenue function as $l_0=l_0(\tau,x)$ is no so restrictive, 
since the utility dynamics has no control input.

{\bf (B)} 
Consider the case that 
the utility's revenue functional is given by 
\begin{eqnarray*}
J_0(t,x;u) = 
\E_{t,x} \left[ 
\varphi_0(t_f,x_{t_f}) + \int_{t}^{t_f} l_0(\tau,x_\tau) d\tau
+ \sum_{i=1}^2 \left( \varphi_i(t_f,x_{it_f}) + \int_t^{t_f} l_i(\tau,x_{i\tau},u_{i\tau}) d\tau \right) \right].
\end{eqnarray*}
That is, 
the utility's revenue is the sum of the original utility's revenue 
and the agent's revenues. 
Assume further that 
the payment of the rewards for the agents 
is not liquidated in the social welfare, 
i.e., the utility's revenue functional above is identical to 
the social welfare functional, and the agents ask no profit, 
i.e., $k_i(t,x) \equiv 0$, $i=1,2$. 
In this case, 
as the problem is basically equivalent to 
that in Case (A) with $k_i(t,x) \equiv 0$, $i=1,2$, 
repeating the same argument as in Case (A), 
we can obtain the same result as 
Corollary~\ref{coro:01} with $h_{i0}^*(t,x) \equiv 0$, $i=1,2$. 
The result shows that, 
if the price vector $\nabla_x V(t,x)$, 
called the adjoint vector in the optimal control theory, 
is provided by the utility, 
each agent can realize his/her optimal control in a decentralized way 
such as $\mu_i^{h_{i1}}(\tau,x) = \mu_i(\tau,x_i,\nabla_x V(\tau,x))$, $i=1,2$; 
this result corresponds to the dual decomposition of 
the static optimization based on ``Lagrange multiplier" (price). 
On the other hand, 
from the viewpoint of the incentive design, 
each agent in Case (B)
has a zero level of incentive to the participation in the market 
(the decentralized optimization based on the price), 
because he/she obtains no profit, 
whereas, in Case (A), 
agents have the profits $k_i(t,x)$, $i=1,2$, rewarded by the utility 
and have the incentives to the participation. 
We see that 
the implementation of this decentralized optimization scheme 
may require additional incentives or legal forces for strategic agents.

{\bf (C)}
The reward design discussed so far incentivizes agents 
to constitute a Nash equilibrium (Constraint 1)
and to participate in the market 
if the profit level is over his/her expectation (Constraint 2). 
However, 
these are assured under the tacit assumption that 
the agents' private information consisting of the model data and the on-line data 
is truthfully sent and bidden; 
if an agent fictitiously bids his/her private information, 
for example, 
the Nash equilibrium shifts or disappears; 
the \emph{mechanism design} \cite{bib17,bib18} 
provides a solution in such case by using additional incentives. 
Consider the same setting as in Case (B) where 
the social welfare functional given as above and does not include 
the budget for payment of the agents' rewards, 
and, on the other hand, 
let the agent's profit functional have an additional reward functional as 
\begin{eqnarray*}
I_i^w(t,x;u) = J_i(t,x;u) 
+ \E_{t,x} \left[ W_i^w(t,x_t^{t_f};u) \right]
+ \E_{t,x} \left[ W_{ai}^w(t,x_t^{t_f};u) \right]
\end{eqnarray*}
where 
\begin{eqnarray*}
\E_{t,x}\left[ W_{a1}^w(t,x_t^{t_f};u) \right] 
= J_0(t,x;u) + J_{2}(t,x;u), &&
\E_{t,x}\left[ W_{a2}^w(t,x_t^{t_f};u) \right] 
= J_0(t,x;u) + J_{1}(t,x;u)
\end{eqnarray*}
and $l_0=l_0(\tau,x)$. 
In this case, 
replacing 
$\varphi_i(t_f,x_{it_f})$ with 
$\varphi_0(t_f,x_{t_f}) + \sum_{i=1}^2 \varphi_i(t_f,x_{it_f})$ 
and also 
$l_i(\tau,x_i,u_i)$ with 
$l_0(\tau,x) + \sum_{i=1}^2 l_i(\tau,x_i,u_i)$ 
and repeating the same argument as in Case (A), 
we have the same conclusion as in Corollary~\ref{coro:01}, 
so that 
the agents should constitute a Nash equilibrium and participate in the market. 
In this case, moreover, 
the agents should report his/her model information 
and bid his/her on-line information truthfully; 
the reason for this is as follows. 
First, note that the additional reward $W^{ai}$ provides 
the utility and all the agents with the same revenue, 
so that the optimal price from the viewpoint of the social welfare 
is optimal for all the agents. 
Second, note that the utility calculates the optimal price based on 
the reported model and the bidden states. 
Therefore, if an agent sends or bids fictitiously his/her private information to the market, 
the agent obtains a price which is not optimal for his/her own profit. 
This incentivizing scheme corresponds to the Groves mechanism \cite{bib17} 
in mechanism design literatures. 
Finally, we point out an issue of this scheme; 
the rewards $W^{ai}$ should be additionally paid from the social welfare budget.

\section{Conclusions}
\label{sec:5}

On the basis of a genetic model suggested from the average system frequency model \cite{bib09}, 
we have discussed the incentivizing market and model-based approach 
to design the energy management and control systems 
which realize ancillary services in dynamic power grids. 
The key issue of the approach is to incentivize the agents (areas) to open their private information, 
which is essential to realize our model-based scheme, to the utility. 
We have proposed a design method of such incentivizing market 
by integrating the economics models and tools with the dynamic physical model, 
and clarified its basic properties of use together with its possibilities and limits for further developments.

\appendix

Based on the principle of optimality, 
the value function (\ref{eq:agent_Vfunc}) 
leads to the HJB equation:
\begin{subequations}
\arraycolsep=2pt
\begin{eqnarray}
\nabla_t V_i^w(t,x) &+& \frac{1}{2} \tr\left[ \nabla_x^2 V_i^w(t,x) D(t)D(t)^\top \right]
\nonumber \\
&=& 
- \max_{u_i\in U_i} \left[ \left( \nabla_x V_i^w(t,x) + w_{i2}(t,x) \right) 
f(t,x,u_i,u_{-i}^w(t,x)) + l_i(t,x_i,u_i) + w_{i1}(t,x) \right]
\nonumber \\
&=& 
- \left[ \left( \nabla_x V_i^w(t,x) + w_{i2}(t,x) \right) 
f(t,x,u_i^w(t,x),u_{-i}^w(t,x)) + l_i(t,x_i,u_i^w(t,x)) + w_{i1}(t,x) \right], 
\\
V_i^w(t_f,x_{t_f}) &=& \varphi_i(t_f,x_{it_f}) + w_{if}(t_f,x_{t_f}). 
\end{eqnarray}
\label{eq:DP_V}
\end{subequations}
Substituting the above relation into the right hand side of the Ito's differential equality 
\begin{equation*}
dV_i^w(t,x_t) 
= 
\left( \nabla_t V_i^w(t,x_t) 
+ \frac{1}{2} \tr\left[ \nabla_x^2 V_i^w(t,x_t) D(t)D(t)^\top \right] \right) dt
+ \nabla_x V_i^w(t,x_t) dx_t
\end{equation*}
along with 
$dx_t = f(t,x_t,u_i(t,x_t),u_{-i}^w(t,x_t))dt + D(t)d\beta_t$, 
and integrating the both side on $[t,t_f]$, 
we have the reward functional of 
the form (\ref{eq:V-W}) with (\ref{eq:V-h0}) and (\ref{eq:V-h1}).

\nocite{*}


\begin{thebibliography}{99}
\bibitem{bib01}
M. Amin, A.M. Annaswamy, C.L. DeMarco and T. Samad, 
{\it IEEE vision for smart grid controls: 2030 and beyond}, 
IEEE Press, 2013.

\bibitem{bib02}
M.D. Ilic and S.X. Liu, 
{\it Hierarchical Power Systems Control -- Its Value in a Changing Industry}, 
Springer, 1996.

\bibitem{bib03}
Y.G. Rebours, D.S. Kirschen, M. Trotignon and S. Rossignol, 
``A survey of frequency and voltage control ancillary services -- Part I: Technical features," 
{\it IEEE Trans. Power Systems}, vol. 22, no. 1, pp.350--357, 2007.

\bibitem{bib04}
A.J. Wood and B.F. Wollenberg, 
{\it Power Generator Operation and Control}, 
Wiley, 1996.

\bibitem{bib05}
M.A.B. Zammit, D.J. Hill and R.J. Kaye, 
``Designing ancillary services markets for power system security," 
{\it IEEE Trans. Power Systems}, vol. 15, no. 2, pp. 675--680, 2000.

\bibitem{bib06}
E. Ela, V. Gevorgian, A. Tuohy, B. Kirby, M. Milligan and M. O'Malley, 
``Market designs for the primary frequency response ancillary service -- Part I: motivation and design," 
{\it IEEE Trans. Power Systems}, vol. 29, no. 1, pp. 421--431, 2014.

\bibitem{bib07}
M.D. Ilic, 
``Toward a unified modeling and control for sustainable and resilient electric energy systems," 
{\it Foundations and Trends in Electric Energy Systems}, vol. 1, no. 1--2, pp. 1--141, 2016.

\bibitem{bib08}
A. Kiani, A. Annaswamy and T. Samad, 
``A hierachical transactive control architecture for renewables integration in smart grids: Analytical modeling and stability," 
{\it IEEE Trans. Smart Grid}, vol. 5, no. 4, pp. 2054--2065, 2014.

\bibitem{bib09}
A.W. Berger and F.C. Schweppe, 
``Real time pricing to assist in load frequency control," 
{\it IEEE Trans. Power Systems}, vol. 4, no. 3, pp. 920--926, 1989.

\bibitem{bib10}
P. Bolton and M. Dewatripont, 
{\it Contract Theory}, 
The MIT Press, 2005.

\bibitem{bib11}
B. Holmstrom and P. Milgrom, 
``Aggregation and linearity in the provision of intertemporal incentives," 
{\it Econometrica}, vol. 55, no. 2, pp. 303--328, 1987.

\bibitem{bib12}
W.H. Fleming and R.W. Rishel, 
{\it Deterministic and Stochastic Optimal Control}, 
Springer, 1975.

\bibitem{bib13}
H. Schattler and J. Sung, 
``The first-order approach to the continuous time principal-agent problem with exponential utility," 
{\it J. Economic Theory}, vol. 61, pp. 331--371, 1993.

\bibitem{bib14}
H.K. Koo, G. Shim and J. Sung, 
``Optimal multi-agent performance measures for team contracts," 
{it Mathematical Finance}, vol. 18, no. 4, pp. 649--667, 2008.

\bibitem{bib15}
Y. Sannikov, 
``Contracts: The theory of dynamic principal-agent relationships and the continuous-time approach," 
In: D. Acemoglu, M. Arellano and E. Dekel (Eds.), 
{\it Advances in Economics and Econometrics, 10th World Congress of the Econometric Society}, 
Cambridge University Press, 2013.

\bibitem{bib16}
J. Cvitanic and J. Zhang, {\it Contract Theory in Continuous-Time Models}, 
Springer, 2013

\bibitem{bib17}
M.O. Jackson, 
``Mechanism theory," 
In: U. Derigs (Ed.), {\it Encyclopedia of Life Support Systems}, 
EOLSS Publishers, 2003. 

\bibitem{bib18}
Y. Okajima, T. Murao, K. Hirata, and K. Uchida, 
``A dynamic mechanism for LQG power networks with random type parameters and pricing delay," 
{\it Proc. 52nd IEEE Conf. Decision and Control}, pp. 2384--2390, 2013.


\bibitem{MB17}
J. Moon and T. Basar, 
``Linear quadratic risk-sensitive and robust mean field games," 
{\it IEEE Trans. Automatic Control}, vol. 62, no. 3, pp. 1062--1077, 2017.


\end{thebibliography}
\end{document}